\documentclass[pdflatex,sn-mathphys-num]{sn-jnl}

\usepackage{graphicx}%
\usepackage{multirow}%
\usepackage{amsmath,amssymb,amsfonts}%
\usepackage{amsthm}%
\usepackage{mathrsfs}%
\usepackage[title]{appendix}%
\usepackage{xcolor}%
\usepackage{textcomp}%
\usepackage{manyfoot}%
\usepackage{booktabs}%
\usepackage{subcaption}
\usepackage{cleveref}

\usepackage{float}
\usepackage{algorithm}%
\usepackage{algorithmicx}%
\usepackage{algpseudocode}%
\usepackage{listings}%
\usepackage{tikz}
\usetikzlibrary{arrows.meta,positioning}
\usetikzlibrary{calc}

\theoremstyle{thmstyleone}%
\newtheorem{theorem}{Theorem}
\newtheorem{proposition}[theorem]{Proposition}%

\theoremstyle{thmstyletwo}%
\theoremstyle{thmstylethree}%
\newtheorem{definition}{Definition}%
\theoremstyle{thmstyleone}
\newtheorem{lemma}[theorem]{Lemma}
\newtheorem{corollary}[theorem]{Corollary}
\theoremstyle{thmstyletwo}
\newcommand{\Estimate}{\mathsf{Estimate}}
\newcommand{\E}{\mathbb{E}}
\newcommand{\Prb}{\mathbb{P}}
\newcommand{\core}{\operatorname{core}_2}
\newcommand{\Bridges}{\operatorname{Bridges}}
\newcommand{\wtO}{\widetilde O}
\newcommand{\wtOmega}{\widetilde\Omega}
\newcommand{\ind}{\mathbf 1}
\newcommand{\depth}{\operatorname{depth}}

\begin{document}

\title[Adaptive Black-Box Exactness Barriers for Nearest-Source Girth Estimation in CONGEST]{Adaptive Black-Box Exactness Barriers for Nearest-Source Girth Estimation in CONGEST}


\author{\fnm{Indraveni} \sur{Chebolu}}\email{indravenik@cdac.in}

\author*{\fnm{Arnab} \sur{Mallick}}\email{arnabm@cdac.in}

\author{\fnm{Ch A S} \sur{Murty}}\email{chasmurty@cdac.in}

\author{\fnm{Seema} \sur{Pangal}}\email{seemap@cdac.in}

\author{\fnm{B S} \sur{Rajpurohit}}\email{rajpurohits@cdac.in}

\author{\fnm{Harmesh} \sur{Rana}}\email{raman.rana45@gmail.com}

\affil{\orgdiv{Information Security Group}, \orgname{Centre for Development of Advanced Computing}, \orgaddress{\street{Hardware Park}, \city{Hyderabad}, \postcode{501510}, \state{Telangana}, \country{India}}}



\abstract{Recent multi-scale nearest-source methods give polynomially sublinear approximations for girth in the CONGEST model. We study exactification by adaptive black-box composition while preserving the same fresh exchangeable source-selection primitive. Our scalar-oracle model exposes the sampled source identities and the scalar estimate from every call, allows arbitrary persistent controller state, adaptive source cardinalities and capacities, adaptive stopping, and an arbitrary final decoder; the internal nearest-source tables remain encapsulated.

We first construct, for infinitely many $n$, a same-size pair $G_t^0,G_t^1$ of maximum-degree-three, logarithmic-diameter graphs whose girths are distinct and both $\Theta(\log n)$. A length-transfer construction makes every bulk source contribute identically on the two graphs. The scalar transcripts can separate the pair only when one of $O(\log n)$ interface sources survives a linear nearest-source rank competition. Coupling the adaptive executions with a conditional permutation-rank bound yields an $\Omega(n/\log n)$ expected retained-source workload requirement for constant exactness probability, even with arbitrary final decoding. For the standard sequential packetized realization, the same scale is an expected-round barrier.

A complementary bridgeless family $\widehat H_t$ shows the same $\Omega(n/\log n)$ direct-retuning barrier on bounded-degree graphs with minimum degree at least two, no bridges, and $2$-core equal to the whole graph. Finally, under a known promise $g\ge h$, one full-source call from $\Theta(n/h)$ uniformly sampled sources computes exact girth with constant probability in $O(n/h+D)$ rounds, matching the $n/g$ source scale.}

\keywords{distributed graph algorithms; CONGEST; girth; source detection; sampling; black-box lower bounds; exact computation} 



\maketitle

\section{Introduction}

The girth of an undirected graph is the length of its shortest cycle. In CONGEST, exact girth has a general $O(n)$-round route, while the best known exact algorithms that exploit the girth value become close to linear as the girth grows~\cite{PelegRodittyTal2012,CensorHillelEtAl2020}. The unrestricted exact problem lies between a $\widetilde\Omega(\sqrt n)$ lower bound inherited from $(2-\varepsilon)$-approximation, for every fixed $\varepsilon>0$, and the linear upper bound~\cite{FrischknechtHolzerWattenhofer2012}. In contrast, recent approximation results are polynomially sublinear. Chechik, Lifshitz, and Mukhtar give, for every constant integer $f>2$, an $f$-approximation to unweighted undirected girth in $\wtO(n^{1/f}+D)$ rounds~\cite{ChechikLifshitzMukhtar2026}. A central ingredient repeatedly samples sources, retains a bounded number of nearest-source BFS records, and invokes the same scalar girth estimator.

This motivates a structural question: \emph{how much exact information can adaptive composition extract from the nearest-source estimator while preserving its source-selection primitive?} We study this question at two complementary levels.

\paragraph{Arbitrary-decoder scalar-oracle barrier.}
Our main model allows a controller to observe both the sampled source set and the scalar estimate after every call. It may retain the complete exposed transcript, choose all future source cardinalities and nearest-source capacities adaptively, stop at an arbitrary transcript-dependent time, and apply an arbitrary final decoder. Thus the result does not rely on returning one of the observed estimates or on hiding sampled source identities.

For $t\ge7$, we construct two graphs $G_t^0,G_t^1$ on the same labeled vertex set, with
\[
 \Delta(G_t^b)\le3,\qquad D(G_t^b)=\Theta(\log n_t),
\]
and distinct girths
\[
 \gamma_t^0=16t+1,\qquad \gamma_t^1=16t+3.
\]
The construction transfers two vertices from attachment stems into the unique cycle. This increases the girth by two while moving every vertex in two exponentially large binary trees one hop closer to the cycle. Since a tree source $s$ contributes ``cycle length plus twice distance to the cycle,'' the two changes cancel exactly. Moreover, the nearest-source rank of every such bulk source at its witness endpoints is identical in the two graphs.

Only an interface set $I_t$ of $|I_t|=\gamma_t^1=O(\log n_t)$ vertices can create different scalar outputs. Every interface source faces at least $\Omega(n_t)$ strictly closer competitors at one of its witness endpoints. Coupling the two executions call by call gives
\begin{equation}\label{eq:intro-coupling}
 \Prb[\text{the exposed transcripts separate}]
 \le \frac{6\gamma_t^1}{n_t}\,\E[W_*],
\end{equation}
where $W_*$ is the retained-source workload accumulated along the common transcript. Exactness with probability at least $2/3$ on both inputs requires the transcripts to separate with probability at least $1/3$, and therefore
\begin{equation}\label{eq:intro-main}
 \E[W]=\Omega(n_t/\gamma_t^1)
 =\Omega(n_t/\log n_t).
\end{equation}
This is an indistinguishability statement for adaptive scalar-oracle composition with arbitrary final decoding.

\paragraph{Bridgeless robustness.}
We also retain a structurally robust family $\widehat H_t$: a bounded-degree odd cactus with minimum degree at least two, no bridges, logarithmic diameter, and $\core(\widehat H_t)=\widehat H_t$. Its unique shortest cycle has length $g_t=8t+1=\Theta(\log\widehat n_t)$. For the direct-retuning subclass, where the final answer is the minimum scalar estimate observed, we prove
\begin{equation}\label{eq:intro-bridge}
 \Prb[\mathcal A(\widehat H_t)=g_t]
 \le \frac{25g_t}{\widehat n_t}\E[W].
\end{equation}
Thus the same $\Omega(n/\log n)$ workload scale persists after $2$-core extraction and bridge elimination.

\paragraph{Matching source scale.}
Under a known promise $g\ge h$, one call with $\Theta(n/h)$ uniformly sampled sources and capacity equal to the sample size computes exact girth with constant probability in $O(n/h+D)$ rounds. Hence the $n/g$ dependence is tight for direct source hitting.

The proofs share one geometric principle. In an odd cactus, each source has exactly one cycle-closing edge per cycle block, and its candidate equals the block length plus twice the source-to-block distance. Exactness is therefore governed by whether the relevant source survives nearest-source competition at that witness edge. The main pair uses this formula to obtain scalar invariance; the bridgeless family uses it to force exact certification through a crowded antipodal edge.

\section{Related Work and Paradigm Context}\label{sec:related}

\paragraph{Exact girth and approximation.}
Peleg, Roditty, and Tal established foundational CONGEST algorithms for diameter and girth~\cite{PelegRodittyTal2012}, and Censor-Hillel et al. later obtained the parameterized exact-girth bound $\wtO(\min\{g\,n^{1-1/\Theta(g)},n\})$ together with faster algorithms for several fixed even-cycle problems~\cite{CensorHillelEtAl2020}. The general exact problem has the $O(n)$ upper bound and a $\widetilde\Omega(\sqrt n)$ lower bound inherited from $(2-\varepsilon)$-approximation~\cite{FrischknechtHolzerWattenhofer2012}.

Manoharan and Ramachandran give the most relevant bridge between the earlier girth literature and recent approximation tradeoffs~\cite{ManoharanRamachandran2024}. For undirected unweighted graphs they improve the $(2-1/g)$-approximation to $\wtO(\sqrt n+D)$ rounds and prove additional lower bounds for larger constant approximation ratios. In the minimum-weight-cycle problem, they prove near-linear lower bounds for exact and $(2-\varepsilon)$-approximate computation in undirected weighted graphs and in directed graphs, together with sublinear $2$- or $(2+\varepsilon)$-approximation algorithms. Chechik, Lifshitz, and Mukhtar obtain an $f$-approximation for unweighted undirected girth in $\wtO(n^{1/f}+D)$ rounds for every constant integer $f>2$~\cite{ChechikLifshitzMukhtar2026}. Their multi-scale nearest-source estimator is the architecture studied here.

\paragraph{Technique-specific barriers.}
Sublinear CONGEST cycle detection has developed through both algorithmic improvements and structural analyses of successful techniques. Eden et al. gave sublinear algorithms for small cliques and even cycles and isolated obstacles encountered by several communication-complexity reductions~\cite{EdenEtAl2019}. Censor-Hillel et al. advanced exact detection for several fixed even cycles~\cite{CensorHillelEtAl2020}, while Gonen--Oshman and Le Gall--Miyamoto give complementary lower bounds for subgraph and induced-cycle detection~\cite{GonenOshman2018,LeGallMiyamoto2021}. A close methodological precedent is Fraigniaud, Luce, and Todinca's SIROCCO work on the local threshold-based colored-BFS paradigm~\cite{FraigniaudLuceTodinca2023}; Fraigniaud, Luce, Magniez, and Todinca subsequently introduced a global-threshold approach for $C_{2k}$-freeness~\cite{FraigniaudLuceMagniezTodinca2024}. Our results similarly quantify the information available through adaptive composition of a specific distributed primitive.

\section{Model and Scalar-Oracle Composition}\label{sec:model}

\paragraph{CONGEST and notation.}
We consider a simple connected undirected unweighted graph $G=(V,E)$ with $n=|V|$ and hop diameter $D$. Vertices have distinct $O(\log n)$-bit identifiers. In each synchronous round each edge carries $O(\log n)$ bits in each direction; local computation is free~\cite{Peleg2000}. For vertices $u,v$, write $d_G(u,v)$ for graph distance, abbreviated to $d(u,v)$. We write $g(G)$ for girth, with $g(G)=+\infty$ for an acyclic graph, and $\Delta(G)$ and $\delta(G)$ for maximum and minimum degree. A \emph{bridge} is an edge whose deletion disconnects the graph, $\Bridges(G)$ is the set of bridges, and $\core(G)$ denotes the $2$-core obtained by repeatedly deleting vertices of degree below two. A \emph{cactus} is a connected graph in which any two simple cycles share at most one vertex. The symbols $\wtO$ and $\wtOmega$ suppress polylogarithmic factors in $n$. ``With high probability'' means probability at least $1-n^{-c}$ for some fixed constant $c>0$.

\paragraph{Nearest-source estimator.}
For $S\subseteq V$ and $k\ge1$, let $U(S,k,v)$ be the $\min\{k,|S|\}$ closest sources to $v$, breaking distance ties by source identifier. A source $s$ is \emph{retained at $v$} when $s\in U(S,k,v)$. Source detection constructs these records, exact distances, and shortest-path parent pointers in $O(k+D)$ rounds in the relevant unbounded-depth setting~\cite{LenzenPeleg2013}.

We use the deterministic estimator $\Estimate(G,S,k)$ of Chechik et al.~\cite{ChechikLifshitzMukhtar2026}. After source detection, neighboring vertices exchange retained records. For an edge $\{x,y\}$ and a source $s$ retained at both endpoints, the edge is a witness for $s$ when neither endpoint is the other's parent in the $s$-rooted BFS tree. It contributes
\begin{equation}\label{eq:witness}
 d(s,x)+d(s,y)+1.
\end{equation}
The estimator returns the minimum witness value, or $+\infty$ if no witness exists. It never returns a value below the true girth. If $|S|=Q$, every vertex retains $B=\min\{Q,k\}$ records. In the standard packetized realization, the table-exchange stage takes $\Theta(B)$ rounds up to absolute constants.

\begin{definition}[Fresh exchangeable scalar-oracle call]\label{def:fresh}
Fix a graph $G$ and a pre-sampling information state $\mathcal G$. The call has fixed a cardinality $Q\in\{0,\ldots,n\}$ and a capacity $k\ge1$, both measurable with respect to $\mathcal G$. Conditional on $\mathcal G$, it draws
\[
 S\sim\operatorname{Unif}\binom{V}{Q}
\]
and exposes the pair $(S,M)$, where $M=\Estimate(G,S,k)$. The internal nearest-source records remain inside the call. Independent Bernoulli sampling is included by first exposing the realized cardinality and then conditioning on it.
\end{definition}

\begin{definition}[Adaptive scalar-oracle exactification]\label{def:oracle}
Let $(\mathcal F_j)_{j\ge0}$ be the controller-visible transcript filtration.
The controller's graph-dependent information is exactly the sequence exposed by the scalar-oracle calls. Its initial information $\mathcal F_0$ consists of graph-independent public information shared by the compared executions, including the common labeled vertex set and its size.

Before call $j$, using $\mathcal F_{j-1}$ and fresh private randomness, the
controller decides whether to stop and, if it continues, fixes $Q_j,k_j$.
Let $\mathcal G_j\supseteq\mathcal F_{j-1}$ be the resulting pre-sampling
sigma-field. The call draws a fresh uniform $Q_j$-subset $S_j$ and exposes
$(S_j,M_j)$ with
\[
 M_j=\Estimate(G,S_j,k_j).
\]
The controller may retain arbitrary state, and its final output is an arbitrary
measurable function of the complete exposed transcript and its private
randomness. For an executed call define
\[
 B_j=\min\{Q_j,k_j\},\qquad W=\sum_{j=1}^{T}B_j,
\]
where $T$ is the stopping time. We assume $T<\infty$ almost surely and
$\E[W]<\infty$ on the inputs under consideration.
\end{definition}

\begin{figure}[H]
\centering
\begin{tikzpicture}[>=Latex, every node/.style={font=\footnotesize}]
\node[draw,rounded corners,align=center,minimum width=3.6cm,minimum height=8mm] (hist) at (-4.6,0) {$\mathcal F_{j-1}$\\complete exposed transcript};
\node[draw,rounded corners,align=center,minimum width=3.6cm,minimum height=8mm] (par) at (0,0) {stop/continue; choose $Q_j,k_j$};
\node[draw,rounded corners,align=center,minimum width=3.6cm,minimum height=8mm] (sample) at (4.6,0) {$S_j$\\fresh uniform $Q_j$-subset};
\node[draw,rounded corners,align=center,minimum width=3.6cm,minimum height=8mm] (est) at (0,-1.55) {$\Estimate(G,S_j,k_j)$\\nearest-source tables internal};
\node[draw,rounded corners,align=center,minimum width=3.6cm,minimum height=8mm] (out) at (4.6,-1.55) {expose $(S_j,M_j)$};
\draw[->] (hist)--(par); \draw[->] (par)--(sample);
\draw[->,rounded corners] (sample.south) -- ++(0,-.35) -| (est.east);
\draw[->] (est)--(out);
\draw[->,rounded corners] (out.south) -- ++(0,-.45) -| (hist.south)
 node[pos=.48,below,align=center] {arbitrary state, parameter adaptation, stopping, and final decoding};
\end{tikzpicture}
\caption{Adaptive scalar-oracle composition. Sampled source identities and scalar estimates are exposed; nearest-source tables stay encapsulated inside each call.}\label{fig:oracle}
\end{figure}

\begin{definition}[Direct retuning]\label{def:direct}
A \emph{direct-retuning controller} is an adaptive scalar-oracle exactification with $T\ge1$ almost surely whose final output is $\min_{1\le j\le T}M_j$. Parameter choices and stopping may use the complete exposed transcript $(S_i,M_i)_{i<j}$.
\end{definition}

\begin{proposition}[Closure of parameter-level retuning]\label{prop:closure}
Consider a method obtained from the multi-scale nearest-source estimator architecture by choosing, possibly adaptively, the number and order of calls, fresh Bernoulli or uniformly fixed-cardinality source samples, per-call capacities, continuation rules, stopping rules, and scalar post-processing based on the exposed pairs $(S_i,M_i)$. Such a method is an adaptive scalar-oracle exactification in the sense of \cref{def:oracle}. If its final answer is the minimum scalar estimate observed, it is a direct-retuning controller in the sense of \cref{def:direct}.
\end{proposition}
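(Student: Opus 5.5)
The plan is to verify, item by item, that every ingredient allowed in the description of such a method fits one of the slots of \cref{def:oracle}, and then to check the final-answer clause against \cref{def:direct}. First, I fix the filtration: $\mathcal F_{j}$ is the $\sigma$-field generated by the public information $\mathcal F_0$ (labeled vertex set, $n$), the controller's private randomness, and the exposed pairs $(S_i,M_i)_{i\le j}$. Every adaptive choice the method makes before call $j$ is a measurable function of $\mathcal F_{j-1}$ and fresh private randomness. This covers the number and order of calls, the per-call capacity, the continuation and stopping rules, and the sampling rate or cardinality. These are exactly the decisions \cref{def:oracle} permits at step $j$. The number of calls is thus realized by the stopping time $T$, and order is immaterial because each call is specified by its own parameters.

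The second step handles the source samples. A fixed-cardinality uniform sample is literally the draw in \cref{def:fresh}. For a fresh Bernoulli sample with rate $p_j$ chosen from $\mathcal F_{j-1}$, I use the standard decomposition. The realized size $Q_j\sim\operatorname{Bin}(n,p_j)$ is drawn first and placed into the pre-sampling $\sigma$-field $\mathcal G_j\supseteq\mathcal F_{j-1}$. Conditional on $Q_j=q$, the Bernoulli set is uniform on $\binom{V}{q}$ by exchangeability: every $q$-subset has probability $p_j^q(1-p_j)^{n-q}$. This is precisely the clause of \cref{def:fresh} that includes Bernoulli sampling. The call then exposes $(S_j,\Estimate(G,S_j,k_j))$, with the nearest-source tables internal, as required.

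The third step treats scalar post-processing. Any answer computed from the exposed pairs $(S_i,M_i)_{i\le T}$ and private randomness is a measurable function of the complete transcript, which \cref{def:oracle} allows as a final decoder. The standing assumptions $T<\infty$ a.s.\ and $\E[W]<\infty$ are inherited as hypotheses on the inputs considered, so I state them rather than derive them. Finally, if the output is $\min_j M_j$, then the method makes at least one call, so $T\ge1$. Its parameters and stopping depend only on $(S_i,M_i)_{i<j}$. Hence it matches \cref{def:direct} verbatim.

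The main obstacle is the Bernoulli-to-uniform reduction. I must make sure that exposing $Q_j$ before sampling does not leak graph information beyond what the definition allows. It does not: $Q_j$ depends only on $p_j$ and independent coins, which are graph-independent given $\mathcal F_{j-1}$. I must also make sure the conditional law of $S_j$ given $\mathcal G_j$ is exactly uniform, not merely uniform given $Q_j$. This holds because the Bernoulli coins are independent of everything in $\mathcal F_{j-1}$ and of the private randomness.
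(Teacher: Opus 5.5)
Your proposal is correct and follows the same route as the paper's proof. Both arguments rest on the same three points: the Bernoulli-to-uniform reduction by conditioning on the realized cardinality, the observation that the exposed transcript filtration supports all adaptive parameter, order, and stopping choices, and the identification of scalar post-processing (respectively the minimum) with the measurable decoder of \cref{def:oracle} (respectively \cref{def:direct}). Your additional care about $T\ge1$, the standing assumptions $T<\infty$ and $\E[W]<\infty$, and the conditional law of $S_j$ given the full pre-sampling $\sigma$-field makes explicit what the paper's three-sentence argument leaves implicit.
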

\begin{proof}
Conditioned on its realized cardinality, a Bernoulli source sample is uniform among subsets of that cardinality, so every call satisfies \cref{def:fresh}. The transcript filtration records the exposed source identities and scalar estimates and therefore supports arbitrary adaptive choices of subsequent cardinalities, capacities, call order, and stopping. Scalar post-processing is exactly the measurable final decoder permitted by \cref{def:oracle}; taking the minimum of the observed estimates gives \cref{def:direct}.
\end{proof}

\paragraph{Information boundary.}
The scalar-oracle architecture treats source cardinality, capacity, scale order, repetition, stopping, and scalar decoding as controller-level parameters. Graph-dependent source identities enrich source selection; exported distances, parents, or nearest-source tables enrich the oracle output; persistent table entries and joint cross-scale compression enrich the state carried between calls; block decompositions and auxiliary cycle routines enrich the graph-processing layer. These mechanisms enlarge the information interface beyond parameter-level composition of the nearest-source estimator.

\section{Odd-Cactus Witness Geometry}\label{sec:witness}

\begin{lemma}[Unique shortest paths in an odd cactus]\label{lem:unique-paths}
Every pair of vertices in a cactus whose cycle blocks are all odd has a unique shortest path.
\end{lemma}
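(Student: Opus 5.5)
The plan is to use the block-cut tree of the cactus to reduce uniqueness of shortest paths to uniqueness inside a single block, and then use oddness of the cycle lengths to settle the single-block case. Every block of a cactus is either a bridge (a single edge $K_2$) or a simple cycle, and here every cycle block has odd length. The block-cut tree is a tree, so for vertices $u,v$ there is a unique sequence of blocks $\beta_1,\dots,\beta_m$ and cut vertices $c_1,\dots,c_{m-1}$ on the path from a block containing $u$ to a block containing $v$. (When $u$ or $v$ is itself a cut vertex, choose the endpoint blocks so that $m$ is minimal.)

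\emph{Step 1: every $u$--$v$ path passes through every $c_i$.} Any $u$--$v$ path, simple or not, meets each $c_i$, because each $c_i$ separates $u$ from $v$. A shortest path is simple, so it visits $u=c_0,c_1,\dots,c_{m-1},c_m=v$ in this order, each exactly once. Its segment between $c_{i-1}$ and $c_i$ stays inside $\beta_i$: a simple path that leaves a block through a cut vertex cannot re-enter that block without repeating the cut vertex. It follows that
\[
 d(u,v)=\sum_{i=1}^m d_{\beta_i}(c_{i-1},c_i),
\]
and that a shortest $u$--$v$ path is exactly a concatenation of shortest $c_{i-1}$--$c_i$ paths, one taken inside each $\beta_i$. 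Uniqueness of the whole path therefore reduces to uniqueness inside each block.

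\emph{Step 2: uniqueness inside a block.} If $\beta_i$ is a bridge, the only path between its endpoints is the edge itself. If $\beta_i$ is an odd cycle of length $\ell=2r+1$, then two distinct vertices $x,y$ on it are joined by exactly two simple paths inside the cycle, of lengths $a$ and $\ell-a$. These lengths cannot be equal, since $a=\ell-a$ would force $\ell$ to be even. So exactly one of the two arcs is strictly shorter, and it is the unique shortest $x$--$y$ path within $\beta_i$.

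The only step needing care is the structural claim in Step 1: that a simple path cannot leave a block and later return to it. This is the standard fact that two distinct blocks share at most one vertex (a cut vertex), and that the block-cut graph is acyclic. I would cite this fact rather than reprove it.

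Combining the steps, the unique shortest path is the concatenation of the unique shortest arcs chosen in each block along the block-cut tree path, which proves \cref{lem:unique-paths}.
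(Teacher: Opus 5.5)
Your proposal is correct and follows the paper's own route. The block-cut tree fixes the sequence of blocks and cut vertices, and inside each odd cycle block the two arcs of lengths $a$ and $\ell-a$ cannot coincide. The only differences are that you justify the decomposition step explicitly (every shortest path visits the cut vertices in order and stays inside each block) and treat bridge blocks separately, whereas the paper's three-line proof leaves both implicit.
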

\begin{proof}
The block-cut tree fixes the sequence of blocks traversed by a simple path. Inside each odd cycle, the two arcs between the entry and exit vertices have different lengths, so exactly one is shorter. Concatenating these local choices yields a unique shortest path.
\end{proof}

For a source $s$ and an odd cycle block $Z$, let $a_Z(s)$ be the gate through which the unique shortest path from $s$ enters $Z$, with $a_Z(s)=s$ when $s\in Z$.

\begin{lemma}[Odd-cycle witness formula]\label{lem:block-witness}
Let $G$ be an odd cactus, fix a source $s$, and let $Z$ be a cycle block of length $|Z|=2r+1$. In the BFS tree rooted at $s$, exactly one edge of $Z$ is not a parent-child edge. Its endpoints are both at $Z$-distance $r$ from $a_Z(s)$, and its estimator candidate is
\begin{equation}\label{eq:block-candidate}
 |Z|+2d(s,a_Z(s)).
\end{equation}
\end{lemma}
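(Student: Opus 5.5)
The plan is to reduce everything to the structure of shortest paths from $s$ into $Z$, using \cref{lem:unique-paths}. Write $a=a_Z(s)$ and label the cycle $Z$ as $a=z_0,z_1,\dots,z_{2r}$ in cyclic order. First I show that for every vertex $z_i\in Z$ the unique shortest path from $s$ to $z_i$ passes through $a$. When $s\in Z$ this is immediate, since $a=s$. Otherwise, the block-cut tree forces every $s$--$z_i$ path to enter $Z$ through the cut vertex $a$ that separates $s$ from $Z\setminus\{a\}$. Consequently
\[
d(s,z_i)=d(s,a)+d_Z(a,z_i),\qquad d_Z(a,z_i)=\min\{i,\,2r+1-i\}.
\]
This holds because $Z$ is an isometric subgraph of the cactus: any detour leaving $Z$ must return through the same cut vertex, so it cannot shorten distances inside the cycle.

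Next I identify the BFS parent pointers on $Z$. By uniqueness of shortest paths, the parent of $z_i$ for $1\le i\le r$ is $z_{i-1}$, and the parent of $z_i$ for $r+1\le i\le 2r$ is $z_{i+1}$ (indices mod $2r+1$, so $z_{2r}$ has parent $z_0=a$). The parent of $a$ itself lies outside $Z$, or $a$ is the root. The cycle has $2r+1$ edges, and the parent-child edges among them are $\{z_{i-1},z_i\}$ for $1\le i\le r$ and $\{z_i,z_{i+1}\}$ for $r+1\le i\le 2r$. That accounts for $2r$ distinct edges. The one remaining edge is $\{z_r,z_{r+1}\}$, and neither of its endpoints is the other's parent: the parent of $z_r$ is $z_{r-1}$, and the parent of $z_{r+1}$ is $z_{r+2}$ (or $a$ when $r+1=2r$, i.e.\ $r=1$). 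One degenerate case needs a quick check. If $r=1$, then $z_{r+2}=z_0$, and the statement still holds.

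Both endpoints of this edge satisfy $d_Z(a,z_r)=r$ and $d_Z(a,z_{r+1})=r$, since $2r+1-(r+1)=r$. Substituting into the witness value \eqref{eq:witness} gives
\[
d(s,z_r)+d(s,z_{r+1})+1=2d(s,a)+2r+1=|Z|+2d(s,a_Z(s)),
\]
which is \eqref{eq:block-candidate}.

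I expect the main obstacle to be the first step: justifying rigorously that distances from $s$ to vertices of $Z$ decompose through the single gate $a$ and are measured along $Z$. I would argue this via the block-cut tree. Every block other than $Z$ meets $Z$ in at most one vertex, so any walk from $s$ into $Z$, and any excursion out of $Z$ and back, must pass through a cut vertex. Excursions that return to the same vertex only add length, so shortest $a$--$z_i$ paths stay within $Z$. Once this is settled, the parent-pointer count and the arithmetic are routine.
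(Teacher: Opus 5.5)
Your proposal is correct and follows essentially the same route as the paper's proof. Both arguments decompose $d(s,v)$ through the gate $a_Z(s)$, use uniqueness of shortest paths to force the parent pointers along the two shorter arcs, and identify the single non-parent edge as the one joining the two vertices at $Z$-distance $r$ before substituting into the witness formula. Your explicit labeling $z_0,\ldots,z_{2r}$, the edge count, and the block-cut-tree justification of the gate decomposition simply spell out steps the paper states in one line.
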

\begin{proof}
By \cref{lem:unique-paths}, parent pointers are forced. For $v\in Z$,
\[
 d(s,v)=d(s,a_Z(s))+d_Z(a_Z(s),v).
\]
Exactly one edge of the odd cycle joins the two vertices at distance $r$ from the gate. Every other cycle edge joins consecutive distance layers and is a parent-child edge. Substitution into \cref{eq:witness} gives $2(d(s,a_Z(s))+r)+1$.
\end{proof}

The same formula applies to a retained source in a partial nearest-source table: when the source is retained at both witness endpoints, the reported distances and parent pointers are its exact BFS data.

\begin{lemma}[Uniform-subset rank bound]\label{lem:rank-bound}
Let $z$ be a distinguished vertex and suppose $N$ other vertices are strictly closer than $z$ to a vertex $x$. If $S$ is a uniformly random $Q$-subset of an $n$-vertex ground set, then for every $k\ge1$,
\begin{equation}\label{eq:rank}
 \Prb[z\in S\text{ and }z\in U(S,k,x)]
 \le \min\left\{\frac{Q}{n},\frac{k}{N+1}\right\}.
\end{equation}
\end{lemma}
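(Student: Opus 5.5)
The bound is a minimum of two terms, so I will prove each term as a separate upper bound on $\Prb[z\in S\text{ and }z\in U(S,k,x)]$. The first term is immediate. The event is contained in $\{z\in S\}$, and for a uniform $Q$-subset of an $n$-set, $\Prb[z\in S]=\binom{n-1}{Q-1}/\binom{n}{Q}=Q/n$.

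For the second term, the plan is an exchangeability (random-order) argument. Let $C$ be the set of the $N$ vertices strictly closer to $x$ than $z$, and let $C^+=C\cup\{z\}$, so $|C^+|=N+1$. The key observation is this: if $z\in S$ and $z\in U(S,k,x)$, then fewer than $k$ sources of $S$ lie in $C$. Every element of $S\cap C$ is strictly closer to $x$ than $z$, so it precedes $z$ in the distance-then-identifier order regardless of tie-breaking. If $|S\cap C|\ge k$, then $z$ would not be among the $\min\{k,|S|\}$ closest. Hence, on the event, $z$ has rank at most $k$ among the elements of $S\cap C^+$ under any ordering that places all of $C$ before $z$.

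I will make this precise by conditioning on $m=|S\cap C^+|$. Given $m$, the set $S\cap C^+$ is a uniform $m$-subset of $C^+$. The event requires $z\in S\cap C^+$ and $|S\cap C|=m-1\le k-1$, that is, $m\le k$. So the conditional probability is
\[
\Prb[z\in S\cap C^+\mid m]\cdot \ind[m\le k]=\frac{m}{N+1}\ind[m\le k]\le \frac{k}{N+1}.
\]
Averaging over $m$ gives the bound $k/(N+1)$.

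An equivalent route is to fix a uniformly random ordering of $C^+$. Then $z$'s position among the sampled members of $C^+$ is exchangeable, and the probability that $z$ is sampled and fewer than $k$ members of $C$ are sampled is at most $k/(N+1)$. I will use the conditioning version since it is cleaner.

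The only delicate point is the containment step. Ties at distance equal to $d(x,z)$ only make retention of $z$ harder, so they cannot hurt the bound. The case $|S|<k$ is covered because then $m\le |S|<k$ automatically. I expect no real obstacle beyond stating the conditional uniformity of $S\cap C^+$ given its size, which holds because $S$ is uniform over $Q$-subsets and hence invariant under permutations of $C^+$.

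Combining the two bounds gives \eqref{eq:rank}.
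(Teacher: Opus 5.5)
Your proof is correct and is essentially the paper's argument: both use the exchangeability of $z$ with its $N$ strictly closer competitors, together with the observation that the event allows at most $k-1$ of those competitors in $S$. The paper takes the ``equivalent route'' you mention. It realizes $S$ as the first $Q$ entries of a uniform permutation, so $z$'s relative rank among itself and its competitors is uniform on $\{1,\ldots,N+1\}$ and must be at most $k$ on the event. Your conditioning on $m=|S\cap C^+|$ is the same symmetry with slightly different bookkeeping.
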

\begin{proof}
The first bound is $Q/n$. For the second, generate $S$ as the first $Q$ positions of a uniformly random permutation. Among $z$ and its $N$ strictly closer competitors, the relative rank of $z$ is uniform on $\{1,\ldots,N+1\}$. If $z$ is selected and retained at $x$, at most $k-1$ closer competitors can precede it, so its relative rank is at most $k$.
\end{proof}

\section{A Same-Size Scalar-Indistinguishability Pair}\label{sec:pair}

Fix $t\ge7$, let $r=8t$ and $m=2^t-1$, and take two vertex-disjoint perfect binary trees $T_A,T_B$ of height $t-1$, with roots $a,b$. All vertex identifiers below are fixed and shared by the two graphs.

Let
\[
 P=A,p_1,\ldots,p_{r-1},B
\]
be an $A$--$B$ path of length $r$, and let
\[
 Q=A,q_1,\ldots,q_r,B
\]
be an internally disjoint $A$--$B$ path of length $r+1$. Introduce two further vertices $y_A,y_B$.

In $G_t^0$, the unique cycle is $P\cup Q$, and the two trees are attached by the length-two stems
\[
 A-y_A-a,\qquad B-y_B-b.
\]
In $G_t^1$, attach the roots directly by $A-a$ and $B-b$, and place $y_A,y_B$ into the second cycle arc:
\[
 A,y_A,q_1,\ldots,q_r,y_B,B.
\]
Thus two vertices are transferred from the stems to the cycle while the labeled vertex set stays unchanged. Define the interface set
\[
 I_t=V\setminus\bigl(V(T_A)\cup V(T_B)\bigr).
\]

\begin{figure}[H]
\centering
\begin{subfigure}[t]{0.47\textwidth}
\centering
\begin{tikzpicture}[scale=.72,every node/.style={font=\scriptsize}]
\coordinate (A) at (-2.3,0); \coordinate (B) at (2.3,0);
\fill (A) circle (1.5pt) node[left] {$A$}; \fill (B) circle (1.5pt) node[right] {$B$};
\draw[thick] (A) .. controls (-1,1.35) and (1,1.35) .. (B) node[midway,above=6pt] {$P:r$};
\draw[thick] (A) .. controls (-1,-1.2) and (1,-1.2) .. (B) node[midway,below=7pt] {$Q:r+1$};
\coordinate (yA) at (-2.3,-1.0); \coordinate (a) at (-2.3,-1.8);
\coordinate (yB) at (2.3,-1.0); \coordinate (b) at (2.3,-1.8);
\draw (A)--(yA)--(a); \draw (B)--(yB)--(b);
\fill (yA) circle (1.3pt) node[left] {$y_A$}; \fill (yB) circle (1.3pt) node[right] {$y_B$};
\fill (a) circle (1.3pt) node[left] {$a$}; \fill (b) circle (1.3pt) node[right] {$b$};
\draw (a)--(-2.8,-2.45); \draw (a)--(-1.8,-2.45);
\draw (b)--(1.8,-2.45); \draw (b)--(2.8,-2.45);
\node at (0,-2.6) {$T_A\qquad\qquad T_B$};
\end{tikzpicture}
\caption{$G_t^0$: two length-two stems.}
\end{subfigure}\hfill
\begin{subfigure}[t]{0.47\textwidth}
\centering
\begin{tikzpicture}[scale=.72,every node/.style={font=\scriptsize}]
\coordinate (A) at (-2.3,0); \coordinate (B) at (2.3,0);
\fill (A) circle (1.5pt) node[left] {$A$}; \fill (B) circle (1.5pt) node[right] {$B$};
\draw[thick] (A) .. controls (-1,1.35) and (1,1.35) .. (B) node[midway,above=6pt] {$P:r$};
\draw[thick] (A) .. controls (-1,-1.2) and (1,-1.2) .. (B) node[midway,below=7pt] {$r+3$};
\fill (-1.85,-.48) circle (1.3pt) node[below left] {$y_A$};
\fill (1.85,-.48) circle (1.3pt) node[below right] {$y_B$};
\coordinate (a) at (-2.3,-1.55); \coordinate (b) at (2.3,-1.55);
\draw (A)--(a); \draw (B)--(b);
\fill (a) circle (1.3pt) node[left] {$a$}; \fill (b) circle (1.3pt) node[right] {$b$};
\draw (a)--(-2.8,-2.2); \draw (a)--(-1.8,-2.2);
\draw (b)--(1.8,-2.2); \draw (b)--(2.8,-2.2);
\node at (0,-2.35) {$T_A\qquad\qquad T_B$};
\end{tikzpicture}
\caption{$G_t^1$: $y_A,y_B$ move into the cycle.}
\end{subfigure}
\caption{Length transfer. The cycle grows by two while every tree vertex moves one hop closer to it.}\label{fig:pair}
\end{figure}

\begin{lemma}[Pair structure]\label{lem:pair-structure}
The graphs $G_t^0,G_t^1$ have the same
\[
 n_t=2^{t+1}+16t+1
\]
vertices and maximum degree at most three. Each is unicyclic, with
\[
 \gamma_t^0=g(G_t^0)=2r+1=16t+1,
 \qquad
 \gamma_t^1=g(G_t^1)=2r+3=16t+3.
\]
Moreover $D(G_t^b)=\Theta(t)=\Theta(\log n_t)$.
\end{lemma}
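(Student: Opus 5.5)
The proof is direct bookkeeping: count the vertices, check degrees, identify the unique cycle, and bound the diameter in both graphs.

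\emph{Vertex count.} Each perfect binary tree of height $t-1$ has $m=2^t-1$ vertices, so the two trees contribute $2^{t+1}-2$. The cycle part contributes $A,B$, the $r-1$ internal vertices of $P$, and the $r$ internal vertices of $Q$, for a total of $2r+1$. Adding $y_A,y_B$ gives
\[
2^{t+1}-2+2r+1+2=2^{t+1}+2r+1=2^{t+1}+16t+1 .
\]
Both graphs use the same labeled vertex set, so this count holds for each.

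\emph{Degrees.}
\begin{itemize}
\item Internal path vertices have degree $2$.
\item $A$ and $B$ have degree $3$: two cycle neighbours and one stem or root neighbour.
\item Tree roots have degree $3$ (two children and the stem), internal tree vertices have degree $3$, and leaves have degree $1$.
\item $y_A,y_B$ have degree $2$ in both graphs.
\end{itemize}
Hence $\Delta\le 3$.

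\emph{Unicyclicity and girth.} Each graph is connected. Its edge count is $n_t$, because a spanning tree plus one extra edge has exactly $n_t$ edges. Verifying this count directly is a routine sum; alternatively, note that deleting one cycle edge leaves a tree, since the trees and stems hang off the cycle by single edges. Either way the graph is unicyclic, and its girth is the length of the unique cycle.
\begin{itemize}
\item In $G_t^0$ the cycle is $P\cup Q$, of length $r+(r+1)=2r+1=16t+1$.
\item In $G_t^1$ the second arc $A,y_A,q_1,\dots,q_r,y_B,B$ has $r+3$ edges, so the cycle has length $2r+3=16t+3$.
\end{itemize}

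\emph{Diameter.} Every vertex lies within distance $(t-1)+2$ of the cycle, and any two cycle vertices are within distance $\lfloor(2r+3)/2\rfloor\le r+1$ of each other. This gives the upper bound
\[
D\le 2(t+1)+r+1=O(t).
\]
For the lower bound, take two leaves of $T_A$ in different subtrees of $a$; their distance is $2(t-1)$, so $D=\Omega(t)$. Alternatively, take antipodal cycle vertices, whose distance is $\ge r=8t$.

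Since $n_t=\Theta(2^t)$, we have $t=\Theta(\log n_t)$, and therefore $D=\Theta(\log n_t)$.

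The only step needing care is the vertex/edge accounting that certifies exactly one cycle. It is cleanest to argue structurally: each graph is a cycle with two pendant trees attached at single vertices, namely a stem path followed by a binary tree. Such a graph is unicyclic by construction, so no edge counting is needed.
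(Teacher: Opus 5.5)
Your proof is correct and takes essentially the paper's route: the same vertex count $2m+2r+3$, the same structural observation that each graph is one cycle with two pendant trees (which gives unicyclicity and identifies the girth), and the same diameter sandwich between a cycle-distance lower bound and a route-through-the-cycle upper bound (your $r+2t+3$ is marginally looser than the paper's $r+2t+2$, which is immaterial). One blemish: justifying the edge count by ``a spanning tree plus one extra edge has exactly $n_t$ edges'' is circular as phrased, but the structural argument you give right after it (deleting one cycle edge leaves a tree) is sound and is exactly what the paper uses.
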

\begin{proof}
The two trees contribute $2m$ vertices. The remaining labeled set has $2r+3$ vertices in both graphs, giving $n_t=2m+2r+3$. Each graph consists of one cycle with two trees attached, so the displayed cycle is unique and determines the girth. The cycle itself gives diameter at least $r$, while any two vertices can route through the cycle using at most $r+2t+2$ edges. Since $n_t=\Theta(2^t)$, the diameter and both girths are $\Theta(\log n_t)$.
\end{proof}

For a source in a unicyclic odd graph, call its unique non-parent cycle edge its \emph{cycle witness}. A sampled source is \emph{active} when it is retained at both endpoints of this edge.

\begin{lemma}[Bulk-source invariance]\label{lem:bulk-invariance}
Fix a source $s\in T_A\cup T_B$. For every common sampled set $S$ containing $s$ and every capacity $k$, the source $s$ is active in $G_t^0$ if and only if it is active in $G_t^1$. When active, it contributes the same scalar candidate in both graphs.
\end{lemma}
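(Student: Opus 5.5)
By symmetry, fix $s\in T_A$; the case $s\in T_B$ is identical with the roles of $A,B$ exchanged. The plan is to identify the cycle witness of $s$ in each graph, show that the two witnesses are the same labeled edge, and show that the nearest-source order at its endpoints is the same in both graphs. Activity and the candidate value then coincide automatically.

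\textbf{Step 1: the witness edge is the same.} In both graphs the gate of $s$ into the cycle is $A$. The distance to the gate is $d_0(s,A)=\depth(s)+2$ in $G_t^0$ and $d_1(s,A)=\depth(s)+1$ in $G_t^1$. By \cref{lem:block-witness}, the witness edge joins the two cycle vertices at cycle-distance $r$ from $A$ in $G_t^0$, and at cycle-distance $r+1$ from $A$ in $G_t^1$, since the cycle lengths are $2r+1$ and $2r+3$. In $G_t^0$ the vertices at cycle-distance $r$ from $A$ are $B$ (via $P$) and $q_r$ (via $Q$); the witness edge is $\{q_r,B\}$. In $G_t^1$ the arc $A,y_A,q_1,\dots,q_r,y_B,B$ has length $r+3$. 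The vertex at distance $r+1$ from $A$ via this arc is $q_r$, and the vertex at distance $r+1$ via $P$ extended by $B$'s side is $y_B$. So the witness edge is $\{q_r,y_B\}$, which is not the same edge as in $G_t^0$.

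This breaks the naive plan, so the comparison should be at the level of witness endpoints and their relation to $s$, not labeled edges. In $G_t^0$, $d_0(s,q_r)=d_0(s,B)=\depth(s)+2+r$. In $G_t^1$, $d_1(s,q_r)=d_1(s,y_B)=\depth(s)+1+r+1$. These are the same distances, and the candidate is $2(\depth(s)+r+2)+1$ in both graphs. This matches \cref{lem:block-witness}: $2r+1+2(\depth(s)+2)=2r+3+2(\depth(s)+1)$. That settles the value claim.

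\textbf{Step 2: activity is equivalent, via a rank comparison.} It remains to show that $s$ is retained at both endpoints in $G_t^0$ iff it is retained at both in $G_t^1$. Since ties are broken by identifier, it suffices to show, for each witness endpoint $x$, that the set of sources that beat $s$ at $x$ is the same in both graphs. At least, this must hold for the tree vertices; interface sources lying in $S$ have to be handled explicitly. I would pair the endpoints as $q_r\leftrightarrow q_r$ and $B\leftrightarrow y_B$. Then I would compute, for every vertex $v$, the quantity $d(v,x)-d(s,x)$ in both graphs and check that its sign pattern (negative, zero, or positive) is preserved.

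\begin{itemize}
\item \emph{Tree vertices.} Distances from tree vertices to both witness endpoints shift uniformly: each tree vertex is one hop closer to the cycle and the relevant arc is one edge longer. So comparisons among tree vertices are preserved, including comparisons between $T_A$ and $T_B$ vertices, which use different gates.
\item \emph{Interface vertices (cycle and stem vertices).} Their distances change non-uniformly, so these need a direct case check. I expect this to be the main obstacle.
\end{itemize}

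For the interface vertices I would first use the fact that $d(s,x)\ge r+2$ while interface vertices lie near the cycle. Then I would verify case by case, for $x=q_r$ and for $x=B$ versus $y_B$, whether each interface vertex is strictly closer, tied, or farther in both graphs. If a discrepancy appears, for example $y_B$ itself being an endpoint in $G_t^1$ but a stem vertex in $G_t^0$, I would adjust the endpoint pairing, or note that the statement only needs activity equivalence, and check the competitor sets directly.
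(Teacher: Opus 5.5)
Your Step 1 is correct and matches the paper. The witnesses are $\{q_r,B\}$ in $G_t^0$ and $\{q_r,y_B\}$ in $G_t^1$, $s$ lies at distance $r+\depth(s)+2$ from every witness endpoint in both graphs, and the candidates coincide. Your Step 2 also sets up exactly the paper's comparison: pair $q_r\leftrightarrow q_r$ and $B\leftrightarrow y_B$, and show that the set of vertices preceding $s$ in the distance-then-identifier order is the same. But you stop at the decisive case. You flag the interface vertices as ``the main obstacle,'' propose a case-by-case check, and leave open a possible discrepancy to be fixed by re-pairing endpoints. Retention equivalence is the real content of the lemma, so the proof as written is incomplete.

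The missing ingredient is a uniform quantitative bound rather than a case analysis: every $v\in I_t$ satisfies $d(v,x)\le r+1$ at each witness endpoint $x$, in both graphs.
\begin{itemize}
\item In $G_t^0$, cycle vertices are within $r$ of $x$ because the cycle has length $2r+1$. The stem vertex $y_B$ is within $2$ of both $B$ and $q_r$, and $y_A$ is at distance exactly $r+1$ through $A$.
\item In $G_t^1$, all of $I_t$ lies on the $(2r+3)$-cycle through $x$, hence within $r+1$.
\end{itemize}
Because $d(s,x)\ge r+2$, every interface vertex strictly precedes $s$ in both graphs. So there are no ties, $y_B$ becoming an endpoint causes no discrepancy, and no re-pairing is needed. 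The margin is exactly one hop (compare $y_A$ with the root $s=a$ in $G_t^0$), so the remark that interface vertices ``lie near the cycle'' cannot close the case without this exact constant.

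Your ``uniform shift'' argument for tree vertices is right but loosely justified. It is cleaner to note two facts. A $T_A$ vertex of depth $e$ is at distance exactly $e+r+2$ from each matched endpoint in both graphs. Every $T_B$ vertex is within $t+2<r+2$ of each matched endpoint, so it also strictly precedes $s$. The predecessor set of $s$ at each matched endpoint is then
$I_t\cup V(T_B)\cup\{u\in T_A:(\depth(u),\operatorname{ID}(u))<_{\mathrm{lex}}(\depth(s),\operatorname{ID}(s))\}$ in both graphs. This gives identical retention, hence identical activity, for every common $S$ and $k$.
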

\begin{proof}
By symmetry, take $s\in T_A$ and let $d=\depth_{T_A}(s)$. Its gate to the cycle is $A$. In $G_t^0$ its cycle witness is $\{q_r,B\}$, while in $G_t^1$ it is $\{q_r,y_B\}$. By \cref{lem:block-witness}, its candidate is
\[
 \gamma_t^0+2(d+2)=2r+5+2d
 =\gamma_t^1+2(d+1).
\]

It remains to compare retention. Match the endpoint $q_r$ to itself and $B$ in $G_t^0$ to $y_B$ in $G_t^1$. At either matched endpoint, the vertices preceding $s$ in the distance-then-identifier order are exactly
\begin{equation}\label{eq:predecessors}
 I_t\;\cup\;V(T_B)\;\cup\;
 \{u\in T_A:(\depth(u),\operatorname{ID}(u))
 <_{\mathrm{lex}}(d,\operatorname{ID}(s))\}.
\end{equation}
Indeed, a vertex $u\in T_A$ of depth $e$ is at distance $e+r+2$ from each matched endpoint in the corresponding graph. Every vertex of $T_B$ is at distance at most $t+2$, while every vertex of $I_t$ is at distance at most $r+1$; both are strictly closer than $s$, whose distance is $r+d+2$. Thus the predecessor set in \cref{eq:predecessors} is identical in the two graphs. For a fixed sample $S$, membership of $s$ among the $k$ closest sampled sources is therefore identical at both matched witness endpoints. The argument for $s\in T_B$ is symmetric.
\end{proof}

\begin{lemma}[Interface crowding]\label{lem:interface-crowding}
For every $z\in I_t$ and $b\in\{0,1\}$, one endpoint $x$ of the cycle witness of $z$ in $G_t^b$ has at least $m=2^t-1$ vertices strictly closer to $x$ than $z$. For $t\ge7$,
\[
 m+1\ge n_t/3.
\]
\end{lemma}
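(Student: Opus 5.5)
For $z\in I_t$, \cref{lem:block-witness} places the cycle witness of $z$ at the two cycle vertices at cycle-distance $r$ (in $G_t^0$) or $r+1$ (in $G_t^1$) from the gate $a_Z(z)$. I would find an endpoint $x$ of that edge such that one entire tree, say $T_B$, is strictly closer to $x$ than $z$ is. That gives $|V(T_B)|=m$ strictly closer vertices.

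\textbf{Step 1: locate the gate.} The vertices of $I_t$ are either cycle vertices (with gate $z$ itself) or, in $G_t^0$, the stem vertices $y_A,y_B$ (with gates $A,B$). In $G_t^0$ I handle $y_A$ directly. Its distance to the cycle is $1$, and its witness is $\{q_r,B\}$, the same edge as for tree sources at $A$. Take $x=B$. Then $d(y_A,B)=r+1$, while every vertex of $T_B$ is at distance at most $2+(t-1)=t+1<r+1$ from $B$, since $r=8t$. The case $y_B$ is symmetric.

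\textbf{Step 2: cycle vertices.} For $z$ on the cycle $C$ of length $2r+1$ or $2r+3$, write $L=|C|=2\ell+1$. The witness endpoints $x,x'$ satisfy $d(z,x)=d(z,x')=\ell\ge r$. Let $T_A$ hang from $A$ at stem length $\sigma_A$, where $\sigma_A\in\{2,1\}$, and similarly $T_B$ from $B$. A vertex $u\in T_B$ then satisfies $d(u,x)\le d_C(B,x)+\sigma_B+(t-1)$. So it suffices to choose $x\in\{x,x'\}$ with $\min\{d_C(A,x),d_C(B,x)\}$ small compared with $\ell-t-1$. Here I would use that $A$ and $B$ are nearly antipodal: they split the cycle into arcs of lengths $r$ and $r+1$ (or $r+3$). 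Hence at least one of $A,B$ lies within cycle distance about $\ell/2+O(1)$ of $x$ or of $x'$.

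This is where I expect the \emph{main obstacle}. The naive bound $\ell/2+t+O(1)<\ell$ needs $\ell/2>t+O(1)$, which holds since $\ell\approx 8t$. I would verify it carefully. Because $x$ and $x'$ are adjacent and antipodal to $z$, the four points $A,B,x,x'$ on $C$ give $d_C(A,x)+d_C(B,x)\le \ell+1$. Therefore $\min\{d_C(A,x),d_C(B,x)\}\le(\ell+1)/2$. Picking the closer root, say $B$, we get
\[
d(u,x)\le \tfrac{\ell+1}{2}+2+t-1 < \ell = d(z,x)
\]
for $t\ge7$, because $\ell\ge 8t$. So all $m$ vertices of $T_B$ are strictly closer to $x$ than $z$ is. If $A$ is the closer root, the same argument uses $T_A$. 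Ties cannot occur, since the inequality is strict.

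\textbf{Step 3: the numeric bound.} With $m=2^t-1$ and $n_t=2^{t+1}+16t+1$, the claim $3(m+1)=3\cdot 2^t\ge 2^{t+1}+16t+1$ reduces to $2^t\ge 16t+1$. This holds at $t=7$ ($128\ge113$) and then propagates by induction.
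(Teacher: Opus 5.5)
Your argument is essentially the paper's: for cycle vertices you compare $d(z,x)\ge r=8t$ with the distance from $x$ through the nearer anchor into the whole tree hanging there (at most about $4t+t$), and you handle the stems $y_A,y_B$ of $G_t^0$ via $d(y_A,B)=r+1$. One correction: the bound $d_C(A,x)+d_C(B,x)\le\ell+1$ is false in $G_t^1$, because the $A$--$B$ arc through $x$ can have length $r+3=\ell+2$ (for $z$ interior to $P$, both witness endpoints give sum exactly $r+3$); replace it by $\min\{d_C(A,x),d_C(B,x)\}\le\lfloor(r+3)/2\rfloor\le4t+1$, as the paper does, which holds for every cycle vertex (so either witness endpoint works) and leaves ample slack ($5t+2<8t$), so the conclusion is unaffected.
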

\begin{proof}
Every cycle-witness endpoint lies on the unique cycle. Every cycle vertex is within distance at most $\lfloor(r+3)/2\rfloor\le4t+1$ of $A$ or $B$. Choose the nearer anchor $C$. Every vertex of the tree attached at $C$ is within an additional $t+1$ hops in $G_t^0$ and within an additional $t$ hops in $G_t^1$. Hence all $m$ vertices of that tree lie within distance at most $5t+2$ of $x$.

If $z$ lies on the cycle, then its distance to either cycle-witness endpoint is $r$ in $G_t^0$ and $r+1$ in $G_t^1$. The two stem vertices $y_A,y_B$ in $G_t^0$ have witness distance $r+1$. Since $5t+2<8t=r$, all $m$ tree vertices are strictly closer to $x$ than $z$. Finally, $m+1=2^t\ge n_t/3$ for $t\ge7$ because $2^t\ge16t+1$.
\end{proof}

\begin{lemma}[One-call scalar coupling]\label{lem:pair-call}
Couple one fresh scalar-oracle call on $G_t^0$ and $G_t^1$ by using the same uniform $Q$-subset $S$ and the same capacity $k$. Let $B=\min\{Q,k\}$. Then
\begin{equation}\label{eq:pair-call}
 \Prb\!\left[\Estimate(G_t^0,S,k)\ne\Estimate(G_t^1,S,k)\mid Q,k\right]
 \le \min\left\{1,\frac{6\gamma_t^1 B}{n_t}\right\}.
\end{equation}
\end{lemma}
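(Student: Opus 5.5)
The plan is to show that, under the coupling with a common $S$ and common $k$, the two scalar outputs can differ only if some interface source $z\in I_t\cap S$ is active in at least one of the graphs, and then to bound that event with \cref{lem:rank-bound} and \cref{lem:interface-crowding}.

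\emph{Reduction to interface sources.} Both graphs are odd unicyclic graphs, hence odd cacti. By \cref{lem:block-witness}, the only candidate a sampled source $s$ can produce comes from its unique cycle witness. It produces that candidate exactly when $s$ is active, since tree edges are parent-child edges in every BFS tree and never witness. Therefore
\[
\Estimate(G_t^b,S,k)=\min_{s\in S\text{ active in }G_t^b}\bigl(\text{candidate of }s\text{ in }G_t^b\bigr).
\]
\Cref{lem:bulk-invariance} says the bulk sources $s\in S\cap(T_A\cup T_B)$ are active in $G_t^0$ iff they are active in $G_t^1$, and then they contribute equal candidates. So the bulk parts of the two minima coincide. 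If no $z\in I_t\cap S$ is active in either graph, the two minima are taken over identical multisets and agree, including the case where both equal $+\infty$. This gives
\[
\{\Estimate(G_t^0,S,k)\ne\Estimate(G_t^1,S,k)\}\subseteq\bigcup_{b\in\{0,1\}}\ \bigcup_{z\in I_t}E_{z,b},
\]
where $E_{z,b}$ is the event that $z\in S$ and $z$ is active in $G_t^b$.

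\emph{Bounding each event.} Fix $z\in I_t$ and $b$. By \cref{lem:interface-crowding}, some endpoint $x$ of $z$'s cycle witness in $G_t^b$ has $N\ge m$ vertices strictly closer than $z$. Activity requires $z\in U(S,k,x)$, and retention at $x$ is all we need. \Cref{lem:rank-bound} then gives
\[
\Prb[E_{z,b}\mid Q,k]\le\min\Bigl\{\tfrac{Q}{n_t},\tfrac{k}{m+1}\Bigr\}\le\frac{3B}{n_t},
\]
using $m+1\ge n_t/3$. If $B=Q$, the first term gives $Q/n_t\le 3B/n_t$. If $B=k$, the second term gives $k/(m+1)\le 3k/n_t$.

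\emph{Union bound.} There are $|I_t|=2r+3=\gamma_t^1$ interface vertices and two graphs, so summing gives
\[
2\gamma_t^1\cdot\frac{3B}{n_t}=\frac{6\gamma_t^1B}{n_t}.
\]
Taking the minimum with the trivial bound $1$ yields \cref{eq:pair-call}.

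\emph{Main obstacle.} I expect the delicate step to be the reduction itself. Two points need care. First, ties in the minimum must not matter: the bulk candidate multisets are identical, so the minima agree whenever the interface contributions are absent. Second, the witness formula must apply with partial tables, which the remark after \cref{lem:block-witness} covers. One should also check that $|I_t|$ equals exactly $\gamma_t^1$, which holds since $|I_t|=n_t-2m=2r+3$.
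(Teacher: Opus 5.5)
Your proposal is correct and follows the paper's proof essentially step for step. Bulk-source invariance reduces any discrepancy to an interface source $z\in I_t$ that is active in at least one graph; interface crowding together with the rank bound gives $\min\{Q/n_t,k/(m+1)\}\le 3B/n_t$ per source and graph; and a union bound over the $\gamma_t^1$ interface vertices and the two graphs gives the claim. Your explicit description of $\Estimate$ as a minimum over the candidates of active sources, your handling of ties, and your case split for the $3B/n_t$ estimate only spell out steps that the paper leaves implicit.
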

\begin{proof}
By \cref{lem:bulk-invariance}, every bulk source in $T_A\cup T_B$ is active in both graphs simultaneously and contributes the same candidate. Therefore unequal scalar outputs require some interface source $z\in I_t$ to be active in at least one graph.

Fix $z$ and one graph. By \cref{lem:interface-crowding,lem:rank-bound}, activity implies retention at a witness endpoint with at least $m$ strictly closer vertices, so
\[
 \Prb[z\text{ is active}]
 \le \min\left\{\frac{Q}{n_t},\frac{k}{m+1}\right\}
 \le \frac{3B}{n_t}.
\]
Since $|I_t|=2r+3=\gamma_t^1$, a union bound over the interface sources and the two graphs gives \cref{eq:pair-call}.
\end{proof}

\section{Adaptive Arbitrary-Decoder Barrier}\label{sec:oracle-barrier}

\begin{theorem}[Adaptive scalar-oracle exactification barrier]\label{thm:oracle-barrier}
Let $\mathcal A$ be any adaptive scalar-oracle exactification from \cref{def:oracle}, run on both $G_t^0$ and $G_t^1$ with the same identifier assignment. If
\[
 \Prb[\mathcal A(G_t^b)=\gamma_t^b]\ge\frac23
 \qquad\text{for }b\in\{0,1\},
\]
then
\begin{equation}\label{eq:oracle-work}
 \min_{b\in\{0,1\}}\E[W(G_t^b)]
 \ge \frac{n_t}{18\gamma_t^1}
 =\Omega\!\left(\frac{n_t}{\log n_t}\right).
\end{equation}
\end{theorem}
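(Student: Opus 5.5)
We couple the two executions of $\mathcal A$ on $G_t^0$ and $G_t^1$ on one probability space. Both share the same private controller randomness and the same initial information $\mathcal F_0$. Whenever the two transcripts agree up to call $j-1$, both controllers make the same stop/continue decision and choose the same $(Q_j,k_j)$, and we draw one common uniform $Q_j$-subset $S_j$ for both. Once the transcripts first differ, the remaining randomness is coupled arbitrarily, for instance independently. Under this coupling each marginal is exactly the law of $\mathcal A$ on the corresponding graph, since each call is fresh and exchangeable in the sense of \cref{def:fresh}. Let $\tau$ be the first index with $M_j^0\ne M_j^1$, with $\tau=\infty$ if none exists. The sampled sets always coincide before $\tau$, so the exposed transcripts separate only if $\tau\le T$. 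On $\{\tau>T\}$ the full transcripts and the private randomness coincide, so the final outputs coincide.

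Define the common-prefix workload $W_*=\sum_{j\le \min(T,\tau)}B_j$. Up to and including call $\tau$, the choices $B_j$ are identical in both runs, and $W_*\le W(G_t^b)$ for each $b$. The key estimate is
\[
 \Prb[\tau\le T]=\sum_{j\ge1}\Prb[\tau=j,\ j\le T]\le\sum_{j\ge1}\E\bigl[\ind\{j\le T,\ \tau>j-1\}\,\Prb[M_j^0\ne M_j^1\mid\mathcal G_j]\bigr].
\]
Conditional on $\mathcal G_j$ on the event that the transcripts have agreed so far and call $j$ is executed, $S_j$ is a fresh uniform $Q_j$-subset used on both graphs. Therefore \cref{lem:pair-call} bounds the inner probability by $6\gamma_t^1B_j/n_t$. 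Summing gives $\Prb[\tau\le T]\le (6\gamma_t^1/n_t)\,\E[W_*]$, which is \cref{eq:intro-coupling}. Exchanging sum and expectation is justified by monotone convergence, using $T<\infty$ almost surely and $\E[W]<\infty$. The main care point is measurability: the event $\{j\le T,\tau>j-1\}$ must be $\mathcal G_j$-measurable and the sample must be independent of $\mathcal G_j$. Both follow directly from \cref{def:oracle}.

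Now suppose both success probabilities are at least $2/3$, and let $E_b$ be the event that the output on $G_t^b$ equals $\gamma_t^b$. Since $\gamma_t^0\ne\gamma_t^1$, on $\{\tau>T\}$ the outputs are equal and so $E_0\cap E_1$ cannot occur. Hence $\Prb[E_0\cap E_1]\ge 1/3$ forces $\Prb[\tau\le T]\ge1/3$. Combining with the coupling bound gives $\E[W_*]\ge n_t/(18\gamma_t^1)$. Because $W_*\le W(G_t^b)$ pointwise for each $b$, the same bound holds for $\min_b\E[W(G_t^b)]$. Finally, \cref{lem:pair-structure} gives $\gamma_t^1=16t+3=\Theta(\log n_t)$, which yields \cref{eq:oracle-work}.
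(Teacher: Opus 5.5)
Your proof is correct and follows the paper's argument essentially step for step. You use the same coupling (shared controller coins and shared source sets while the transcripts agree), bound the divergence probability by $(6\gamma_t^1/n_t)\E[W_*]$ by summing \cref{lem:pair-call} through conditional expectation and Tonelli, and note that without divergence the decoder outputs coincide, so your $\Prb[E_0\cap E_1]\ge 1/3$ step is the paper's $\Prb[\mathrm{Succ}_0]+\Prb[\mathrm{Succ}_1]\le 1+\Prb[\mathcal D]$ in another form (and the finiteness assumptions you cite are not actually needed for exchanging the nonnegative sum and expectation).
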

\begin{proof}
Couple the two executions using the same controller randomness, and, while their exposed transcripts agree, the same fresh uniform source set in each call. Let $\mathcal D$ be the event that the exposed transcripts ever diverge. By Definition~\ref{def:oracle}, the two executions begin with identical
controller-visible information. Before the first divergence, their complete
exposed transcripts and coupled private randomness therefore agree, so the
controller has the same state on both graphs and makes the same stop/continue
decision and chooses the same $Q_j,k_j$.

Let $L_j$ be the event that the transcripts agree through call $j-1$ and both executions continue to call $j$, and let $\mathcal D_j$ be the event that the first divergence occurs at call $j$. On $L_j$, \cref{lem:pair-call} applies conditionally on the complete pre-sampling state. Writing $B_j=\min\{Q_j,k_j\}$ on $L_j$ and $B_j=0$ otherwise,
\[
 \Prb[\mathcal D_j\mid\mathcal G_j]
 \le \ind_{L_j}\frac{6\gamma_t^1 B_j}{n_t}.
\]
The events $\mathcal D_j$ are disjoint. Conditional expectation and Tonelli's theorem give
\begin{equation}\label{eq:divergence}
 \Prb[\mathcal D]
 \le \frac{6\gamma_t^1}{n_t}
 \E\!\left[\sum_{j\ge1}\ind_{L_j}B_j\right]
 =\frac{6\gamma_t^1}{n_t}\E[W_*],
\end{equation}
where $W_*$ is the workload accumulated along the common transcript. Pathwise, $W_*\le W(G_t^b)$ for both $b$.

On $\mathcal D^c$, the complete exposed transcripts and private coins are identical, so the arbitrary final decoder returns the same value on both inputs. Since $\gamma_t^0\ne\gamma_t^1$, at most one of the two executions can then be correct. Hence, with $\mathrm{Succ}_b=\{\mathcal A(G_t^b)=\gamma_t^b\}$,
\[
 \Prb[\mathrm{Succ}_0]+\Prb[\mathrm{Succ}_1]
 \le 1+\Prb[\mathcal D].
\]
The assumed $2/3$ success on both graphs yields $\Prb[\mathcal D]\ge1/3$. Combining this with \cref{eq:divergence} gives $\E[W_*]\ge n_t/(18\gamma_t^1)$, and therefore the same lower bound holds for both expected workloads.
\end{proof}

\begin{corollary}[Standard packetized realization]\label{cor:oracle-rounds}
If the calls of \cref{thm:oracle-barrier} are executed sequentially by the standard packetized estimator, then exactness probability at least $2/3$ on both $G_t^0,G_t^1$ implies
\[
 \min_{b\in\{0,1\}}\E[R(G_t^b)]
 =\Omega(n_t/\log n_t).
\]
\end{corollary}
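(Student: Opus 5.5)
The plan is to reduce \cref{cor:oracle-rounds} to \cref{thm:oracle-barrier} through a pathwise comparison between round count and retained-source workload. In the standard packetized realization, call $j$ with cardinality $Q_j$ and capacity $k_j$ has every vertex retain $B_j=\min\{Q_j,k_j\}$ records. The estimator's table-exchange stage then takes $\Theta(B_j)$ rounds, as stated after \cref{eq:witness}. So I will fix an absolute constant $c>0$ such that every executed call with $B_j\ge1$ costs at least $c\,B_j$ rounds. Calls with $B_j=0$ (that is, $Q_j=0$) contribute nothing to $W$, so their cost is irrelevant to the lower bound. Because the calls run sequentially, the total round count satisfies, pathwise,
\[
 R(G_t^b)\ \ge\ \sum_{j=1}^{T} c\,B_j\ =\ c\,W(G_t^b).
\]
Any controller-side overhead, such as the source-detection stage or the local decoding, only increases $R$, so it may be ignored.

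Next I take expectations, giving $\E[R(G_t^b)]\ge c\,\E[W(G_t^b)]$ for each $b$. I will handle the case of infinite expected rounds separately, where the claim is trivial. Otherwise $\E[W]\le\E[R]/c<\infty$, so the integrability hypothesis of \cref{def:oracle} holds and \cref{thm:oracle-barrier} applies. The theorem's exactness hypothesis is the same as the corollary's, because the packetized realization is exactly an adaptive scalar-oracle exactification (\cref{prop:closure}). Applying \cref{thm:oracle-barrier} then yields $\min_b\E[R(G_t^b)]\ge c\,n_t/(18\gamma_t^1)=\Omega(n_t/\log n_t)$, using $\gamma_t^1=16t+3=\Theta(\log n_t)$ from \cref{lem:pair-structure}.

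The main obstacle is justifying the per-call lower bound of $\Omega(B_j)$ rounds, as opposed to the $\Theta(B_j)$ upper bound. The cleanest route is to take as part of the meaning of ``standard packetized realization'' that the table-exchange stage ships all $B_j$ retained records of a vertex across each incident edge, one $O(\log n)$-bit record per round. That stage then needs at least $B_j$ rounds, up to a constant absorbing the record-packing factor. I will state this explicitly as the property of the realization being used. I will also note that it does not depend on graph structure, so the constant $c$ is uniform over both $G_t^0$ and $G_t^1$ and over all adaptive parameter choices.
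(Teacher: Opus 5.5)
Your proposal is correct and is the paper's proof: a pathwise bound $R\ge cW$, justified because the packetized table exchange serializes the $B_j$ retained $O(\log n_t)$-bit records over each edge, followed by \cref{thm:oracle-barrier} with $\gamma_t^1=\Theta(\log n_t)$. Your separate integrability check is harmless but unnecessary, since the corollary inherits the standing assumption $\E[W]<\infty$ of \cref{def:oracle} on both inputs; moreover, the nonnegative Tonelli argument in \cref{thm:oracle-barrier} does not use finiteness, which also covers the mixed case your case split leaves open, where only one of $\E[R(G_t^0)],\E[R(G_t^1)]$ is infinite.
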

\begin{proof}
Each call serializes $B_j$ retained $O(\log n_t)$-bit records over every edge, so $R\ge cW$ pathwise for an absolute constant $c>0$. Apply \cref{thm:oracle-barrier}.
\end{proof}

\section{Bridgeless Robustness for Direct Retuning}\label{sec:bridgeless}

Fix $t\ge1$ and set
\[
 g_t=8t+1,\qquad \ell_t=8t+3=g_t+2,\qquad m_t=2^t-1.
\]
Start with a central odd cycle $C_t=(c_0,\ldots,c_{g_t-1})$. At every $c_i$, attach two vertex-disjoint perfect rooted binary trees of height $t-1$. We call the tree edges together with the anchor-root edges the \emph{skeleton links}. For every skeleton link $\{u,v\}$, keep the direct edge and add an internally vertex-disjoint $u$--$v$ path of length $8t+2$. Each skeleton link therefore lies on an odd gadget cycle of length $\ell_t$. Denote the resulting graph by $\widehat H_t$ and put $\widehat n_t=|V(\widehat H_t)|$.

\begin{figure}[H]
\centering
\begin{tikzpicture}[scale=.78,every node/.style={font=\scriptsize}]
\draw[thick] (0,0) circle (1.35);
\foreach \a in {90,210,330}{
  \coordinate (c) at (\a:1.35);
  \fill (c) circle (1.4pt);
  \coordinate (u) at ($(c)+(\a:0.8)$);
  \coordinate (v1) at ($(u)+(\a-24:0.65)$);
  \coordinate (v2) at ($(u)+(\a+24:0.65)$);
  \draw (c)--(u); \draw (u)--(v1); \draw (u)--(v2);
  \draw[densely dashed] (c) to[bend left=30] (u);
  \draw[densely dashed] (u) to[bend left=28] (v1);
  \draw[densely dashed] (u) to[bend right=28] (v2);
}
\node[align=center] at (0,-2.3) {each skeleton link keeps its direct edge\\and receives a long detour, forming an odd cycle};
\end{tikzpicture}
\caption{Schematic of $\widehat H_t$. The direct skeleton preserves short distances, while every skeleton link belongs to a gadget cycle.}\label{fig:bridgeless}
\end{figure}

\begin{lemma}[Bridgeless family structure]\label{lem:structure}
The graph $\widehat H_t$ is a connected odd cactus with
\begin{align}
 \widehat n_t&=(8t+1)\bigl[1+2(2^t-1)(8t+2)\bigr]=\Theta(t^2 2^t),\label{eq:n}\\
 2&\le\delta(\widehat H_t)\le\Delta(\widehat H_t)\le6.\label{eq:degree}
\end{align}
Every edge lies on a cycle, so $\core(\widehat H_t)=\widehat H_t$ and $\Bridges(\widehat H_t)=\varnothing$. Its unique shortest cycle is $C_t$, with
\[
 g(\widehat H_t)=g_t=\Theta(\log\widehat n_t),
 \qquad
 D(\widehat H_t)=\Theta(\log\widehat n_t).
\]
\end{lemma}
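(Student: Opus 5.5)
The plan is to verify each claim directly from the construction, organized by the block structure. First the cactus property. The skeleton (central cycle $C_t$ plus the $2(8t+1)$ attached trees with their anchor-root edges) is itself a unicyclic graph. Each gadget replaces a skeleton link $\{u,v\}$ by the cycle formed from the direct edge and a private path of length $8t+2$ whose internal vertices are new and appear nowhere else. Hence every cycle of $\widehat H_t$ is either $C_t$ or one gadget cycle. I would argue this by observing that deleting the internal gadget vertices recovers the unicyclic skeleton, so any cycle using an internal gadget vertex must traverse that whole detour and close via the direct edge. Two distinct such cycles share at most one vertex, since gadget cycles of distinct links meet only at a common skeleton endpoint, and a gadget cycle meets $C_t$ only at its anchor. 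All cycle lengths are odd: $|C_t|=8t+1$ and $\ell_t=1+(8t+2)=8t+3$. So $\widehat H_t$ is a connected odd cactus.

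Next the count \cref{eq:n}. Each anchor $c_i$ carries two trees of $m_t=2^t-1$ vertices each. The skeleton links at $c_i$ number $2m_t$: the $2(m_t-1)$ tree edges plus the 2 anchor-root edges. Each link contributes $8t+1$ internal detour vertices. Per anchor this gives $1+2m_t+2m_t(8t+1)=1+2m_t(8t+2)$ vertices, and multiplying by $g_t=8t+1$ anchors yields \cref{eq:n}, which is $\Theta(t^2 2^t)$.

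For the degree bounds \cref{eq:degree}: internal detour vertices have degree $2$ and leaves of trees have degree $1+1=2$. An internal tree vertex has skeleton degree $3$, hence degree $6$. An anchor has $2$ cycle edges plus $2$ links, each doubled, giving $2+4=6$. Since every edge is on a cycle, there are no bridges, and minimum degree $2$ means pruning deletes nothing, so $\core(\widehat H_t)=\widehat H_t$.

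For girth and diameter: the cycles have lengths $8t+1<8t+3$, and $C_t$ is the only cycle of length $8t+1$, so $C_t$ is the unique shortest cycle. Since $\widehat n_t=\Theta(t^22^t)$ we have $t=\Theta(\log\widehat n_t)$, so $g_t=\Theta(\log\widehat n_t)$. For the diameter, the lower bound comes from $C_t$, which contains vertices at distance $4t$. For the upper bound, any vertex is within $4t+1$ of a skeleton endpoint (an internal detour vertex reaches an endpoint of its link along half the detour), then within $t$ of its anchor via direct edges, and anchors are within $4t$ of each other. This gives $D=O(t)$. The main care point is the cactus argument, that no cycles arise beyond $C_t$ and the gadget cycles; deleting the detours to recover the unicyclic skeleton settles it.
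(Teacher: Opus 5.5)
Your proposal is correct and follows essentially the same route as the paper. Both count vertices per anchor, read off degrees by vertex type, classify the cycles as $C_t$ plus exactly one odd gadget cycle per skeleton link, and bound the diameter via the $5t+1$ distance to $C_t$ and antipodal central vertices; the one step worth making explicit is why a cycle through a detour must close via the direct edge, namely that each skeleton link is a bridge of the unicyclic skeleton and every other detour parallels a different link, so nothing else crosses that cut.
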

\begin{proof}
Each central anchor owns two binary scaffolds, each with $m_t$ skeleton vertices and $m_t$ skeleton links. Every link receives $8t+1$ new detour vertices, giving \cref{eq:n} and $\log_2\widehat n_t=t+O(\log t)$. Central and non-leaf skeleton vertices have degree at most six; leaves and detour vertices have degree two.

Ignoring the detours leaves $C_t$ with trees attached. Adding the internally disjoint detour for a skeleton link creates exactly one gadget cycle around that link. Hence the simple cycles are precisely $C_t$ and the gadget cycles, all edge-disjoint. Every edge lies on one of them. Since $|C_t|=8t+1$ and every gadget cycle has length $8t+3$, $C_t$ is the unique shortest cycle. Every vertex is within $5t+1$ of $C_t$, giving diameter at most $14t+2$, while two central vertices at cycle distance $4t$ give diameter at least $4t$.
\end{proof}

By \cref{lem:block-witness}, a gadget block contributes at least $g_t+2$, while the central block contributes
\begin{equation}\label{eq:central-candidate}
 g_t+2d(s,C_t).
\end{equation}
Thus $\Estimate(\widehat H_t,S,k)=g_t$ exactly when some central source $s=c_i$ is sampled and retained at both endpoints of its antipodal central edge.

\begin{lemma}[Linear antipodal crowding]\label{lem:crowding}
At either endpoint $x$ of the antipodal central edge of any source $s=c_i\in C_t$, at least
\[
 N_t\ge8t(2t-1)(2^t-1)\ge\frac{\widehat n_t}{25}
\]
vertices are strictly closer to $x$ than $s$ is.
\end{lemma}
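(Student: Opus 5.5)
The plan is a direct distance count around the antipodal edge, followed by a comparison with the vertex count from \cref{lem:structure}.

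\textbf{Setup.} Fix $s=c_i$ and write $g_t=2r+1$ with $r=4t$. By \cref{lem:block-witness}, applied to the central block with gate $a_{C_t}(s)=s$, the antipodal edge joins the two central vertices at cycle distance $r=4t$ from $s$. Let $x$ be either endpoint, so $d(s,x)=4t$. It then suffices to exhibit at least $8t(2t-1)(2^t-1)$ vertices at distance at most $4t-1$ from $x$. I will use that distances in $\widehat H_t$ are realized through the block-cut structure (\cref{lem:unique-paths}). In particular, the direct skeleton edges give each skeleton vertex of depth $e$ below anchor $c_j$ (root at depth $1$) distance exactly $d(x,c_j)+e$ from $x$.

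\textbf{Counting.} I would pick the anchors $c_j$ with $d_{C_t}(x,c_j)\le 2t-1$. Going around the cycle in both directions from $x$ (and staying on the side away from $s$ where needed), this gives at least $4t$ anchors. For each such anchor, every skeleton link $\{u,v\}$ below it has both endpoints within distance $(2t-1)+t=3t-1$ of $x$. That leaves a slack of $t$ before reaching distance $4t-1$. Along the detour path of that link, the vertices within $t$ hops of $u$ or of $v$ are then within distance $4t-1$ of $x$. There are at least $2t\ge 2t-1$ such detour vertices, and they are distinct because the detour has $8t+1>2t$ internal vertices. Each anchor owns $2m_t$ links and different links have disjoint detours, so the total is at least
\[
 4t\cdot 2m_t\cdot(2t-1)=8t(2t-1)(2^t-1).
\]
I could also add the skeleton vertices themselves, but they are not needed for the bound.

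\textbf{Comparison.} Using \cref{eq:n}, $\widehat n_t\le(8t+1)\cdot 2m_t(8t+3)$, which is about $128t^2m_t$, while $N_t$ is about $16t^2m_t$. The inequality $N_t\ge\widehat n_t/25$ therefore reduces to checking a rational inequality in $t$, namely
\[
 200t(2t-1)\ge (8t+1)(16t+6)+1\ \text{up to the }m_t\text{ factors}.
\]
I would verify this for $t\ge2$ by expanding, and check $t=1$ by hand: $N_1=8$ and $\widehat n_1=189$, with $189/25<8$.

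\textbf{Main obstacle.} The delicate point is the exact anchor and slack bookkeeping. I need exactly $4t$ anchors within cycle distance $2t-1$ of $x$, counting both directions around the $(8t+1)$-cycle, and I need to make sure the leaf links at depth $t$ still have endpoints within $3t-1$. If the boundary cases lose a unit, I will rebalance the per-link count: links closer to $x$ have more slack and can contribute more than $2t-1$ detour vertices, which compensates for the anchors at the boundary.
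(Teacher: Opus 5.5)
Your plan is the paper's argument with a different split of the distance budget $4t-1$. The paper takes the $2t-1$ anchors within cycle distance $t-1$ of $x$, so every link endpoint is within $2t-1$, and the first $2t$ detour vertices from each end of every link, which gives exactly $(2t-1)\cdot 2m_t\cdot 4t=8t(2t-1)m_t$. You instead take cycle radius $2t-1$ and $t$ detour vertices from each end.

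One step is false as written. On the $(8t+1)$-cycle, the ball of radius $2t-1$ around $x$ contains exactly $2(2t-1)+1=4t-1$ anchors, not $4t$, so the factorization $4t\cdot 2m_t\cdot(2t-1)$ is not justified. No rebalancing is needed to fix it, because you already have $2t$ rather than $2t-1$ detour vertices per link: $(4t-1)\cdot 2m_t\cdot 2t=4t(4t-1)m_t\ge 8t(2t-1)m_t$.

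Your comparison via $\widehat n_t\le(8t+1)(16t+6)m_t$ is fine. In fact $200t(2t-1)\ge(8t+1)(16t+6)$ already holds at $t=1$ ($200\ge198$), so the separate hand check is unnecessary.
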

\begin{proof}
Here $d(x,s)=4t$. Consider the $2t-1$ central anchors within cycle distance $t-1$ of $x$. Every skeleton vertex in their two scaffolds lies within distance at most $2t-1$ of $x$. From every skeleton link, take the first $2t$ internal detour vertices from each endpoint. These sets are disjoint and every selected vertex $z$ satisfies
\[
 d(x,z)\le(2t-1)+2t=4t-1<d(x,s).
\]
There are $2(2^t-1)$ skeleton links per anchor, giving the first bound. From \cref{eq:n}, $\widehat n_t\le198t^2(2^t-1)$ while $N_t\ge8t^2(2^t-1)$, hence $N_t/\widehat n_t\ge4/99>1/25$.
\end{proof}

\begin{lemma}[Per-call exactness on $\widehat H_t$]\label{lem:per-call}
For a fresh exchangeable call on $\widehat H_t$, conditional on arbitrary $Q,k$ and with $B=\min\{Q,k\}$,
\begin{equation}\label{eq:per-call}
 \Prb[\Estimate(\widehat H_t,S,k)=g_t\mid Q,k]
 \le \min\left\{1,\frac{25g_tB}{\widehat n_t}\right\}.
\end{equation}
\end{lemma}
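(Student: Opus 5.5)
The plan is to reduce the event $\{\Estimate(\widehat H_t,S,k)=g_t\}$ to the activity of some central source. Then I apply \cref{lem:crowding} and \cref{lem:rank-bound} to each of the $g_t$ central sources and finish with a union bound. The trivial bound $1$ covers the case where the second term exceeds one.

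First I characterize exactness. By \cref{lem:structure}, $\widehat H_t$ is an odd cactus, so \cref{lem:block-witness} applies to every source and every block. For a retained source, the witness value at any edge equals its exact BFS value, as noted after \cref{lem:block-witness}. Any edge that is a witness for a source $s$ is therefore the unique non-parent edge of some block $Z$, and its candidate is $|Z|+2d(s,a_Z(s))$. Gadget blocks give at least $\ell_t=g_t+2>g_t$. The central block gives $g_t+2d(s,C_t)$, which equals $g_t$ only when $s\in C_t$. So the output equals $g_t$ only if some $s=c_i$ lies in $S$ and is retained at both endpoints of its antipodal central edge, and in particular at one chosen endpoint $x$.

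Next, for fixed $i$, I use \cref{lem:crowding}: at least $N_t\ge\widehat n_t/25$ vertices are strictly closer to $x$ than $c_i$. Applying \cref{lem:rank-bound} to the uniform $Q$-subset gives
\[
\Prb[c_i\in S,\ c_i\in U(S,k,x)\mid Q,k]\le\min\left\{\frac{Q}{\widehat n_t},\frac{k}{N_t+1}\right\}\le \frac{25\min\{Q,k\}}{\widehat n_t}.
\]
The last step holds because $\widehat n_t\ge\widehat n_t/25$ and $N_t+1\ge\widehat n_t/25$, so each term is at most $25B/\widehat n_t$. A union bound over the $g_t$ central vertices then yields \cref{eq:per-call}.

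The main obstacle is the characterization step. I must make sure that ties broken by identifier, or partial tables, cannot produce a spurious witness with value $g_t$. The risk is that an edge could be declared a witness because a parent pointer is missing. That cannot happen, because a source retained at both endpoints has its exact parent pointers, and by \cref{lem:unique-paths} those pointers are forced. Only genuine block non-parent edges therefore qualify as witnesses, and their values are exactly those given by \cref{lem:block-witness}.
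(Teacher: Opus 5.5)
Your proof is correct and matches the paper's argument. Both reduce $\Estimate(\widehat H_t,S,k)=g_t$, via the odd-cycle witness formula, to the event that some central source $c_i$ is sampled and retained at an antipodal endpoint, then bound each such event by $25B/\widehat n_t$ using antipodal crowding and the uniform-subset rank bound, and finish with a union bound over the $g_t$ central sources. One wording fix: it is not true that each term of the minimum is at most $25B/\widehat n_t$, since $Q/\widehat n_t$ can exceed it when $k<Q$; rather the two terms are at most $25Q/\widehat n_t$ and $25k/\widehat n_t$ respectively, so their minimum is at most $25B/\widehat n_t$.
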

\begin{proof}
Exactness requires a central source $s=c_i$ to be retained at an antipodal endpoint. By \cref{lem:crowding,lem:rank-bound}, for each $s$ this has probability at most $25B/\widehat n_t$. A union bound over the $g_t$ central sources gives \cref{eq:per-call}.
\end{proof}

\begin{theorem}[Bridgeless direct-retuning barrier]\label{thm:bridgeless}
Let $\mathcal A$ be any direct-retuning controller from \cref{def:direct} on $\widehat H_t$. Then
\begin{equation}\label{eq:bridge-main}
 \Prb[\mathcal A(\widehat H_t)=g_t]
 \le \frac{25g_t}{\widehat n_t}\E[W].
\end{equation}
Consequently, success probability at least $2/3$ requires
\[
 \E[W]=\Omega\!\left(\frac{\widehat n_t}{g_t}\right)
 =\Omega\!\left(\frac{\widehat n_t}{\log\widehat n_t}\right).
\]
The same asymptotic bound holds for expected rounds in the standard sequential packetized realization.
\end{theorem}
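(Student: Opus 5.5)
The plan is to reduce the direct-retuning output to a per-call event and then sum with the optional-stopping style argument already used in the proof of \cref{thm:oracle-barrier}. The estimator never underestimates girth, so $M_j\ge g_t$ for every call. Hence $\mathcal A(\widehat H_t)=\min_{j\le T}M_j=g_t$ holds exactly when some executed call $j\le T$ has $M_j=g_t$. Let $E_j$ be the event that call $j$ is executed and $M_j=g_t$. The union bound then gives
\[
 \Prb[\mathcal A(\widehat H_t)=g_t]\le\sum_{j\ge1}\Prb[E_j].
\]

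For each $j$, I will condition on the pre-sampling sigma-field $\mathcal G_j$. The decision to execute call $j$ and the choice of $Q_j,k_j$ are $\mathcal G_j$-measurable, and conditional on $\mathcal G_j$ the set $S_j$ is a fresh uniform $Q_j$-subset. So \cref{lem:per-call} applies conditionally and gives
\[
 \Prb[E_j\mid\mathcal G_j]\le\ind_{\{j\le T\}}\frac{25g_tB_j}{\widehat n_t}.
\]
Here I use the $\min\{1,\cdot\}$ bound simply as $\le 25g_tB_j/\widehat n_t$. Taking expectations and summing via Tonelli turns $\sum_j\E[\ind_{\{j\le T\}}B_j]$ into $\E[W]$, which yields \cref{eq:bridge-main}. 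The only point needing care is that $\{j\le T\}$ is determined before sampling at step $j$. This holds by \cref{def:oracle}, since stopping is decided from $\mathcal F_{j-1}$ and private randomness, both contained in $\mathcal G_j$.

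For the consequence, success probability $\ge2/3$ and \cref{eq:bridge-main} give $\E[W]\ge 2\widehat n_t/(75g_t)$. By \cref{lem:structure}, $g_t=\Theta(\log\widehat n_t)$, so this is $\Omega(\widehat n_t/\log\widehat n_t)$. For rounds, I will repeat the argument of \cref{cor:oracle-rounds}: each sequential packetized call spends $\Theta(B_j)$ rounds on table exchange, so pathwise $R\ge cW$, and the bound transfers to $\E[R]$.

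I expect no real obstacle. The substantive content is already contained in \cref{lem:per-call}, which rests on the fact that exactness forces a central source to be retained at an antipodal endpoint, the crowding bound of \cref{lem:crowding}, and the rank bound of \cref{lem:rank-bound}. The only delicate step is the measurability bookkeeping for adaptive stopping, which I handle exactly as in \cref{eq:divergence}.
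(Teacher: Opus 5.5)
Your proposal is correct and follows the paper's proof essentially step for step. You define per-call events $E_j$ on $\{T\ge j\}$, bound them conditionally on $\mathcal G_j$ via \cref{lem:per-call}, sum with Tonelli to obtain $\E[W]$, and use the pathwise comparison $R\ge cW$ for rounds; the explicit constant $\E[W]\ge 2\widehat n_t/(75g_t)$ and the measurability remark about the stopping decision just make explicit what the paper leaves implicit.
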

\begin{proof}
Let $L_j=\{T\ge j\}$ and let $E_j$ be the event that executed call $j$ returns $g_t$. Conditional on the complete pre-sampling state, the current source set is fresh and uniform, so \cref{lem:per-call} gives
\[
 \Prb[E_j\mid\mathcal G_j]
 \le \ind_{L_j}\frac{25g_tB_j}{\widehat n_t}.
\]
The final minimum equals $g_t$ only if some $E_j$ occurs. Summing over calls and applying conditional expectation and Tonelli's theorem gives \cref{eq:bridge-main}. The workload conclusion follows immediately. For the standard packetized realization, $R\ge cW$ pathwise for an absolute constant $c>0$.
\end{proof}

\section{Matching Source-Hitting Scale Under a Girth Promise}\label{sec:source-hit}

\begin{proposition}[Source-hitting upper bound]\label{prop:source-hit}
Suppose an $n$-vertex undirected unweighted graph $G$ is promised to satisfy $g(G)\ge h$, where $h\ge1$ is known. There is a one-call nearest-source algorithm that computes exact girth with probability at least $2/3$ using retained-source workload $O(n/h)$ and $O(n/h+D)$ CONGEST rounds in the standard packetized realization. With high probability, the corresponding bounds are $O((n/h)\log n)$ workload and $\wtO(n/h+D)$ rounds.
\end{proposition}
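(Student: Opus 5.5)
We propose the one-call algorithm $\Estimate(G,S,k)$ with $S$ a uniform $Q$-subset, $Q=\min\{n,\lceil c\,n/h\rceil\}$ for a suitable constant $c$, and capacity $k=Q$. Since $k\ge Q$, every vertex retains \emph{all} sampled sources. The tables are therefore complete BFS data from every $s\in S$, and $B=Q$. Source detection then costs $O(Q+D)$ rounds~\cite{LenzenPeleg2013} and table exchange costs $\Theta(Q)$ rounds, for a total of $O(n/h+D)$ rounds and $O(n/h)$ workload. If $g=+\infty$, the estimator returns $+\infty$, which is correct. For the high-probability version we use $Q=\Theta((n/h)\log n)$.

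Correctness rests on two facts. First, the estimator never underestimates the girth, as noted after \cref{eq:witness}. Second, we claim that if some sampled source $s$ lies on a shortest cycle $C$ of length $g$, then the estimator returns exactly $g$. To see this, take a shortest cycle through $s$. Because $C$ is a shortest cycle and $g\ge3$, distances from $s$ to vertices of $C$ are realized along $C$, since any shortcut would create a shorter cycle. In particular, every vertex of $C$ is within distance $\lfloor g/2\rfloor$ of $s$.

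The step needing care is exhibiting a witness edge on $C$ whose two endpoints are not each other's parents in the BFS tree rooted at $s$. If $g=2r+1$, take the antipodal edge $\{x,y\}$ with $d(s,x)=d(s,y)=r$. Equal distances preclude a parent relation, so the candidate is $2r+1=g$. If $g=2r$, let $w$ be the antipodal vertex with $d(s,w)=r$, and let $x,y$ be its two $C$-neighbours, both at distance $r-1$. The vertex $w$ has only one BFS parent, so at least one of the edges $\{x,w\}$ or $\{y,w\}$, say $\{y,w\}$, is a non-tree edge. Its candidate is $(r-1)+r+1=2r=g$. The delicate point is that parent pointers come from arbitrary tie-breaking; the argument above handles this, because at most one $C$-edge at $w$ can be a tree edge. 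Since all sources are retained everywhere, both endpoints hold the records for $s$, so the minimum witness value equals $g$.

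For the probability bound, fix one shortest cycle $C$ with $|C|=g\ge h$ vertices. Then
\[
 \Prb[S\cap V(C)=\varnothing]\le\Bigl(1-\tfrac{g}{n}\Bigr)^{Q}\le e^{-Qg/n}\le e^{-c}.
\]
Choosing $c=\ln3$ gives $\Prb[S\cap V(C)=\varnothing]\le1/3$, so the algorithm succeeds with probability at least $2/3$. Choosing $Q=\lceil c'(n/h)\ln n\rceil$ gives failure probability at most $n^{-c'}$. When $Q=n$, the whole vertex set is sampled and failure is impossible.
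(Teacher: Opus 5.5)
Your proposal is correct and follows the paper's proof essentially step for step. Both proofs sample $\min\{n,\lceil cn/h\rceil\}$ sources with capacity equal to the sample size, bound the probability of missing a fixed shortest cycle by $e^{-c}$, use the isometry of a shortest cycle to exhibit a non-parent cycle edge with candidate exactly $g$ in each parity, and close with the no-underestimate property. Your even-case phrasing, that at most one of the two cycle edges at the antipodal vertex $w$ can be its tree edge, is slightly more careful than the paper's ``exactly one of its two cycle neighbors is its BFS parent'': under tie-breaking, $w$'s BFS parent may lie off $C$ (e.g.\ in $K_{2,3}$). The conclusion is unaffected.
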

\begin{proof}
If $G$ is acyclic, the estimator returns $+\infty$. Otherwise fix a shortest cycle $C$ of length $g\ge h$. Choose exactly
\[
 q=\min\{n,\lceil cn/h\rceil\}
\]
sources uniformly without replacement, for a sufficiently large absolute constant $c$, and run $\Estimate(G,S,q)$ with capacity $k=q$. If $q<n$,
\[
 \Prb[S\cap V(C)=\varnothing]
 =\frac{\binom{n-g}{q}}{\binom nq}
 \le\left(1-\frac gn\right)^q
 \le e^{-qg/n}
 \le e^{-c}.
\]
Choose $c$ so that this is at most $1/3$. A shortest cycle is isometric: for $u,v\in C$, a path shorter than the shorter $C$-arc would combine with that arc to contain a cycle shorter than $C$. Hence, for any sampled $s\in C$, the BFS distances on $C$ equal the cycle distances. All vertices at cycle distance strictly below $g/2$ have their unique shortest path along the corresponding shorter arc of $C$. If $g=2r+1$ is odd, the two BFS branches meet across the unique edge whose endpoints are both at distance $r$ from $s$. If $g=2r$ is even, let $v$ be the antipodal cycle vertex at distance $r$ from $s$; exactly one of its two cycle neighbors is its BFS parent, and the other incident cycle edge has endpoint distances $r$ and $r-1$. Thus in either parity there is a non-parent cycle edge $\{x,y\}$ satisfying
\[
 d(s,x)+d(s,y)+1=g.
\]
Since $k=q$, every sampled source is retained everywhere, so this edge contributes $g$. The estimator never returns below girth, and therefore returns exactly $g$. Source detection and table exchange take $O(q+D)=O(n/h+D)$ rounds~\cite{LenzenPeleg2013}. Taking $q=\min\{n,\lceil c(n/h)\log n\rceil\}$ gives the high-probability statement.
\end{proof}

For $\widehat H_t$, setting $h=g_t$ yields $O(\widehat n_t/g_t)$ workload, matching \cref{thm:bridgeless} up to constants. The same $n/g$ dependence is therefore both necessary and sufficient for the direct source-hitting architecture on that family.

\section{Discussion}\label{sec:discussion}

\paragraph{Two complementary robustness axes.}
The pair $G_t^0,G_t^1$ establishes information-interface robustness: sampled source identities are visible, the exposed transcript may persist across calls, all parameters and stopping are adaptive, and the final decoder is arbitrary. The family $\widehat H_t$ establishes graph-structural robustness for direct retuning: minimum degree is at least two, every edge lies on a cycle, the graph is its own $2$-core, and the diameter is logarithmic. Together they separate the role of the scalar estimator interface from immediate graph preprocessing.

\paragraph{Why approximation retains its advantage.}
For an odd cycle block $Z$, \cref{eq:block-candidate} is $|Z|+2d(s,Z)$. Approximation can absorb a positive source-to-cycle distance, while exactness identifies the zero-distance event or, in the paired construction, must distinguish two instances for which the bulk-source values are exactly equal. This is the geometric source of the exactness barrier.

\paragraph{Information exposed by the oracle.}
The scalar-oracle theorem preserves fresh exchangeable source selection and exposes $(S_j,M_j)$ from every call, while nearest-source tables remain encapsulated. The proof therefore identifies richer source selection and exported or jointly compressed table information as the next design dimensions for exact algorithms based on nearest-source exploration.

\section{Conclusion}

Adaptive composition substantially broadens what can be done with the nearest-source girth estimator, but the scalar information it exposes still has an exactness threshold. A same-size length-transfer pair makes every bulk source behave identically across two different girths, and a rank argument shows that separating the pair requires $\Omega(n/\log n)$ expected retained-source workload even with visible source identities, persistent state, adaptive stopping, and arbitrary final decoding. A complementary bridgeless family yields the same scale for direct retuning on graphs equal to their own $2$-core. The matching source-hitting proposition shows that the $n/g$ dependence captured by these barriers is the natural scale for direct exact certification.


\bibliography{references}

\appendix
\section{Finite Verification}\label{app:verification}
The supplementary verification script checks both constructions on finite instances. For the scalar-indistinguishability pair it verifies equal vertex counts, the two claimed girths, the unique cycle witness for every source, equality of bulk-source candidates, equality of the exact predecessor sets used for nearest-source retention, and the interface-crowding bound; it also performs coupled random-call checks of the scalar-invariance event. For $\widehat H_t$ it verifies the cycle partition, degree bounds, absence of bridges, unique-shortest-path and witness identities, the exact-certifying-source characterization, and antipodal crowding. These checks are supplementary to the analytic proofs.

\end{document}